\newtheorem{theorem}{Theorem}
\newtheorem{condition}{Condition}
\newtheorem{corollary}{Corollary}
\newtheorem{definition}{Definition}
\newtheorem{proposition}{Proposition}
\newtheorem{lemma}{Lemma}
\newtheorem{remark}{Remark}
\newtheorem{claim}{Claim}
\newtheorem{assumption}{Assumption}
\newcommand{\bbP}{\mathbbm{P}}
\newcommand{\cL}{\mathcal{L}}
\newcommand{\cA}{\mathcal{A}}
\newcommand{\cS}{\mathcal{S}}
\newcommand{\bs}{{\bf{s}}}
\newcommand{\bdelta}{{\boldsymbol{\delta}}}
\newcommand{\ignore}[1]{{}}
\newcommand{\cbk}{\color{black}}
\newcounter{parentalgorithm}
\begin{document}
%
\title{Vulnerability of Finitely-long Blockchains in Securing Data}
%
%
%

\author{Yiming Jiang  and Jiangfan Zhang, ~\IEEEmembership{Member,~IEEE}
\thanks{Y. Jiang and J. Zhang are with the Department of Electrical and Computer Engineering, Missouri University of Science and Technology, Rolla MO 65409 USA (e-mail: yjk7z@mst.edu, jiangfanzhang@mst.edu)} 
}



\maketitle
%
\begin{abstract}
Recently, blockchain has been  applied in various fields to secure data exchanges and storage in decentralized systems.
In a blockchain application where the task of the application which makes use of the data stored in a blockchain has to be accomplished by a time instant,
the employed blockchain is essentially finitely-long. 
In this paper, we consider a general finitely-long blockchain model which is generalized from most existing works on finitely-long blockchain applications, and take the first step towards characterizing the vulnerability of finitely-long blockchains in securing data against double-spending attacks. For the first time, we develop a general closed-form expression for the probability of success in launching a double-spending attack on a finitely-long blockchain. This probability essentially characterizes the vulnerability of  finitely-long blockchains. Then, we prove that the probability of success in launching a double-spending attack on a finitely-long blockchain is no greater than that on an infinitely-long blockchain, which implies that finitely-long blockchains are less vulnerable to double-spending attacks than infinitely-long blockchains. Moreover, we show that unlike infinitely-long blockchains which can be surely paralyzed by a 51\% attack,   finitely-long blockchains are more resistant to 51\% attacks.
\end{abstract}

\begin{IEEEkeywords}
Finitely-long blockchain, double-spending attack,  proof-of-work, $51\%$ attack. 
\end{IEEEkeywords}

\IEEEpeerreviewmaketitle

\section{Introduction}

As cryptocurrencies are increasingly gaining popularity in both the financial sector and  society at large, their underlying technology, referred to as blockchain, shows great potential for many applications in different engineering disciplines. 
Blockchain was firstly developed for financial applications \cite{nakamoto2008bitcoin, wood2014ethereum}, and has provided feasible measures for establishing mutual trust among network nodes and defending against security threats to data storage and exchanges.  
Because of its security-by-design and distributed nature without authority, blockchain has been applied to diverse engineering fields, such as smart grids \cite{kurt2019secure}, vehicular networks \cite{yang2018blockchain}, and smart city \cite{sharma2018distarch}.

As an emerging secure distributed database technology that revolutionizes the way information is secured, distributed, and shared, blockchain technology can eliminate the need for the central authority and operates on a peer-to-peer network with the following vital components \cite{puthal2018blockchain, dinh2018untangling}: (i) a chronologically ordered sequence of blocks that are cryptographically linked to each other and are shared, stored and synchronized over the network; (ii) strong cryptography enabling secure data storage and secure data exchanges, and (iii) a  consensus protocol that enables verification and validation of the authenticity and integrity of  stored and exchanged data, and thus enables mutual trust over the network instead of relying on a central authority. 
The cryptographic algorithms and digital signature algorithms of the blockchain technology can effectively prevent the impersonation of   network nodes and the attacks on the information exchanges among network nodes.

The consensus is a procedure through which  every new block that is added to the blockchain is the one and only version of the truth that is agreed upon by all network nodes, and hence, ensures that the local blockchain copies of the network nodes can reach a common agreement. In this way, consensus protocols can establish mutual trust among network nodes in a distributed computing environment without needing a central authority. 
There are several types of consensus protocols which can help blockchain network nodes achieve a common agreement, such as Proof of Work (PoW), Proof of Stake (PoS), Proof of Activity (PoA), and Proof of Capacity (PoC). The PoW requires every participant node to compete with each other to solve a computationally challenging puzzle to receive the right to add a new block to the blockchain.
The PoS involves the allocation of responsibility in validating and adding new blocks to the blockchain to a participant node in proportion to the number of cryptocurrency tokens held by it \cite{saleh2021blockchain}.  The PoC allows sharing of memory space of the contributing nodes in the network  \cite{debus2017consensus}. The more memory or hard disk space a node has, the more rights it is granted for maintaining the blockchain. The PoA is a hybrid that makes use of aspects of both PoW and PoS \cite{bentov2014proof}. 
Among all blockchain consensus protocols, the PoW is the most  widely used consensus protocol. For example, Bitcoin, Litecoin, ZCash, and Bitcoin Cash all adopt PoW as their consensus protocols \cite{ouyang2021pow,hopwood2016zcash,kwon2019bitcoin}. In this paper, unless otherwise noted, we assume that the considered blockchains employ the PoW consensus protocol. 

It is worth mentioning that the blockchain technology cannot eradicate all security threats to the data stored in a blockchain. As mentioned in \cite{nakamoto2008bitcoin}, blockchains are prone to the double-spending attack (DSA), which can possibly falsify the data stored in a blockchain and is considered one of the most devastating attacks against blockchains. In recent years, DSAs have occurred several times in financial blockchain applications and are expected to occur more often in the future. For example, one of Bitcoin forks, Bitcoin Gold, suffered double-spending attacks in 2018, and again in 2020, with more than 17 million U.S. dollars lost in total. For a blockchain application where the employed blockchain is utilized to  store data, a successful DSA can modify the existing data stored in the blockchain without being perceived, and hence seriously compromise any task which makes use of the data stored in the blockchain.

In a blockchain application, the employed blockchain can be infinitely-long or finitely-long. 
If the task of the application which makes use of the data stored in the employed blockchain has to be accomplished by some time instant $t_0$, then any future blocks and their data generated after $t_0$ do not affect the accomplished task. To this end, from the perspective of the task, the blockchain can be deemed to reach its final length and stop growing when the task has been accomplished, even though after the task has been accomplished, the blockchain in fact keeps growing to record future data which may be used for other tasks. 
In light of this,
we adopt the following definition to distinguish finitely-long blockchains from infinitely-long blockchains from the perspective of the task of the blockchain application.
\begin{definition}[Finitely-long Blockchain]  \label{Def_finitely_long_blockchain}
	For a blockchain application, if the task of the application which makes use of the past and present data stored in the employed blockchain has to be accomplished by some time instant, then from the perspective of this task, the blockchain employed in this application is deemed finitely-long.
\end{definition}

In  many financial blockchain applications, the employed blockchain plays the role of a ledger recording financial transactions, see \cite{nakamoto2008bitcoin} for instance. In view of  unceasing occurrence of transactions, the task of recording transactions has to be carried out over time and cannot be accomplished by a given time instant. Thus, the blockchain for this kind of application  should be deemed infinitely-long, which is widely adopted in many recent works, see \cite{nakamoto2008bitcoin,wood2014ethereum} for instance.
In contrast, in the blockchain applications for many engineering problems where their tasks have to be accomplished in a timely manner, the employed blockchains should be considered  finitely-long according to Definition \ref{Def_finitely_long_blockchain}. 
For example,  consider a blockchain-aided smart grid monitor system where the state of the power grid needs to be estimated based on meter measurements stored in a blockchain \cite{asefi2021application}. The estimation accuracy is generally determined by the number of meter measurements. Once the number of meter measurements stored in the blockchain is large enough to ensure that the estimation accuracy can meet a prescribed requirement, an estimate of the state of the power grid is produced, and hence, from then on, the growth of the blockchain can be regarded as terminated from the perspective of the task of state estimation since any future data stored in the blockchain will not affect the value of the produced state estimate. Another representative example of a finitely-long blockchain application is a blockchain-based e-voting system where once all ballots for the election have been securely recorded in the blockchain, the election result can be determined, and hence, the growth of the blockchain can be regarded as terminated from the perspective of the task of the election\cite{yang2021priscore}.  In some other existing works on blockchain applications, such as power auction \cite{hassan2021optimizing}, e-voting \cite{hjalmarsson2018blockchain}, and vehicular announcement network \cite{hassija2020traffic,li2018creditcoin}, the employed blockchains are essentially finitely-long as well according to Definition \ref{Def_finitely_long_blockchain}. 
It is worth mentioning that for a finitely-long blockchain application, if an attacker aims at undermining the task of the application by falsifying the data stored in the blockchain, then any effective attack has to be successfully launched before the blockchain reaches its final length. Thus, from the perspective of attackers, the timeliness of attacks which are launched on finitely-long blockchains matters. This is very different from infinitely-long blockchains which gives rise to significant differences between the vulnerabilities of finitely-long blockchains  and infinitely-long blockchains.

Lately, there has been a growing tendency to negligently assume that the data stored in a blockchain is perfectly secure in a lot of recent literature on finitely-long blockchain applications, see \cite {ling2019blockchain, lei2017blockchain, cha2018blockchain, lombardi2018blockchain} for instance.
However, similar to infinitely-long blockchains, the data stored in a finitely-long blockchain are also subject to DSAs, and therefore, may not be perfectly secured in adversarial environments. The basic idea of the DSA is to build a counterfeit branch storing falsified data in a blockchain, and extend this counterfeit branch to be longer than the authentic branch of the blockchain. Once the counterfeit branch becomes longer than the authentic branch of the blockchain, the DSA is deemed to be launched successfully since the counterfeit branch and the data stored in it are considered valid while the authentic branch of the blockchain and its data are considered invalid according to the longest chain protocol of blockchain \cite{nakamoto2008bitcoin}.

The DSA on infinitely-long blockchains has been studied in previous literature,  see \cite{ozisik2017explanation, rosenfeld2014analysis,grunspan2018double,zaghloul2020bitcoin} for instance. However, the analysis of DSAs on infinitely-long blockchains cannot be applied to finitely-long blockchains since there is a time limit for adversaries to attack finitely-long blockchains.
In particular, unlike infinitely-long blockchains, in order to falsify the data stored in a finitely-long blockchain, a DSA has to be launched successfully before the time instant that the task is accomplished, that is, the time instant that the authentic branch of the blockchain grows to a predetermined final length. To the best of our knowledge, there is no existing work studying the vulnerability of finitely-long blockchains and analyzing DSAs on  finitely-long blockchains. In this paper, we consider a general finitely-long blockchain model which is generalized from existing works on finitely-long blockchain applications. For the first time, we theoretically characterize the vulnerability of the finitely-long blockchain by developing the closed-form expression for the probability of success in launching a double-spending attack on a finitely-long blockchain.

\subsection{Summary of Results and Main Contributions}


Considering a finitely-long blockchain which is under a DSA, 
the vulnerability of the finitely-long blockchain can be characterized by the probabilities of success in launching a DSA on the finitely-long blockchain starting from different blocks of the authentic branch.
We show that the derivation of these probabilities can be cast as a two-sided boundary hitting problem for a two-dimensional random walk with two possible walking directions. In particular, the two boundaries in this problem are orthogonal to each other, while the two possible walking directions of the random walk are not orthogonal to each other. Moreover, one walking direction of the random walk is neither orthogonal nor parallel to any of the two boundaries. For such a two-sided boundary hitting problem,  the general closed-form expression for the probability that the random walk hits one boundary before the other is developed, which can describe the probability of success in launching a DSA on a finitely-long blockchain starting from any block of the authentic branch.
This probability depends on the normalized hash rate of the attacker who launches the DSA, which is defined as the ratio of the computational power owned by the attacker to the total  computational power of the blockchain network, the length of the authentic branch of the blockchain at the time of the attack, the ordinal number of the block where the DSA starts from, and the predetermined final length of the finitely-long blockchain.

Moreover, we theoretically compare the vulnerability of finitely-long blockchains with that of infinitely-long blockchains. To be specific, we prove that the probability of success in launching a DSA on a finitely-long blockchain is no greater than that on an infinitely-long blockchain, which implies that finitely-long blockchains are less vulnerable to DSAs than infinitely-long blockchains. It has been proved in previous literature that if the normalized hash rate of an attacker is greater than $50\%$ and the attacker  attacks an infinitely-long blockchain by launching a DSA, which is called a $51\%$ attack, then the probability of success in launching the DSA  is always one, which implies that the attacker is able to falsify any data stored in the infinitely-long blockchain, and hence the security of the infinitely-long blockchain is completely demolished \cite{nakamoto2008bitcoin}. In contrast, we show that unlike infinitely-long blockchains, even though the normalized hash rate of an attacker is greater than $50\%$, the probability of success in launching a DSA on any finitely-long blockchain is strictly less than one. This indicates that 51\% attacks cannot completely demolish the security of finitely-long blockchains. 

\subsection{Related Work}
Finitely-long blockchains have been recently integrated into diverse engineering applications to enhance data security, such as smart homes, smart grids, smart cities, and vehicle networks, see \cite{yang2021priscore,hjalmarsson2018blockchain,hassija2020traffic,lei2017blockchain, cha2018blockchain,lombardi2018blockchain,ling2019blockchain,li2018creditcoin,zhou2018beekeeper,yang2021privacy}  and the references therein. For example, in \cite{li2018creditcoin}, the authors propose a blockchain-based vehicular announcement network where the blockchain is employed to protect data against tampering and make them widely available and accessible over the network against the possibility of node failures and hacking. 
In this application, the employed blockchain is actually finitely-long because once the task of querying recent accident reports which are stored in a blockchain is accomplished, the employed blockchain can be deemed to reach its final length and any future accident report will not affect the accomplished querying task. 
A traffic jam probability prediction system is proposed in \cite{hassija2020traffic} where a blockchain helps vehicles to securely share and request for the live traffic  at a particular location. The blockchain employed in \cite{hassija2020traffic} is also finitely-long because once the task of prediction is accomplished, the growth of the blockchain can be considered to terminate from the perspective of the task of prediction  and future traffic information does not affect the accomplished predicted probability.  However, all the mentioned works assume the data stored in  finitely-long blockchains are perfectly secure, which is not true in general. In particular, the data stored in a finitely-long blockchain can be modified by DSAs, which motivates us to investigate DSAs on finitely-long blockchains.  

The DSA on infinitely-long blockchains has been well studied in previous literature, see \cite{ozisik2017explanation, rosenfeld2014analysis,grunspan2018double,zaghloul2020bitcoin} and the references therein. In \cite{nakamoto2008bitcoin}, the author points out infinitely-long blockchains are vulnerable to DSAs, and the author derives the probability of success in launching a DSA on an infinitely-long blockchain. Moreover, the author indicates that if an attacker controls more than half of the computational power of a blockchain network, the attacker can always successfully launch a DSA on the infinitely-long blockchain. Based on the results in \cite{nakamoto2008bitcoin}, the authors of \cite{grunspan2018double} provide a more detailed analysis on the properties of  DSAs on infinitely-long blockchains. However, all the previous literature which theoretically investigates the DSA  only focuses on infinitely-long blockchains, and  cannot apply to  finitely-long blockchains. In this paper, we consider  DSAs launched on  finitely-long blockchains, and develop the closed-form expression for the probability that an attacker successfully launches a DSA on a finitely-long blockchain.

The paper is organized as follows. In section \ref{Section_systemmodel}, the general finitely-long blockchain model and double-spending attack model are introduced. Section \ref{Section_Probability_DSA} analyzes the probability that an attacker successfully launches a DSA on a finitely-long blockchain. Numerical simulations are provided in Section \ref{Section_simulation}, and Section \ref{Section_conclusion} provides our conclusions.

\section{Finitely-long Blockchain and Adversary Models}\label{Section_systemmodel}
In this section, we first introduce a general finitely-long blockchain model which subsumes most models adopted by previous works on finitely-long blockchain applications, and then the double-spending attack model is elaborated.

\subsection{Finitely-long Blockchain Model} 
\label{Section_FLBM}
We consider a general finitely-long blockchain model which is generalized from most existing works on finitely-long blockchain applications, see \cite{yang2021priscore,hjalmarsson2018blockchain,hassija2020traffic,lei2017blockchain, cha2018blockchain,lombardi2018blockchain,ling2019blockchain,li2018creditcoin,zhou2018beekeeper,yang2021privacy} for instance.  
The working mechanism of  finitely-long blockchains is similar to that of infinitely-long blockchains proposed in \cite{nakamoto2008bitcoin}, except that  a finitely-long blockchain can be considered to stop growing when its longest branch reaches a predetermined final length.

Different nodes can play different roles in a finitely-long blockchain network.
The main duties of nodes include data generation, block/data routing, block/data verification, and block mining.  
In a finitely-long blockchain network, there are mainly two types of nodes, that is, wallets\footnote{The term wallet was firstly introduced in the Bitcoin network. That we use the same terminology here is because the duty of these nodes  is similar to that of wallets in the Bitcoin network which produce the data stored in the blockchain.} and miners \cite{zaghloul2020bitcoin}. 
The wallets produce data which can be transactions, sensor measurements, or any other information, while the main duty of miners is to generate blocks in the blockchain to store the data produced by the wallets. Note that it is possible that a node in a finitely-long blockchain network plays the roles of both wallet and miner. The working mechanism of a finitely-long blockchain network mainly includes two components which are presented below.
 \begin{figure}
	\centering   
	\includegraphics[width=2.5in]{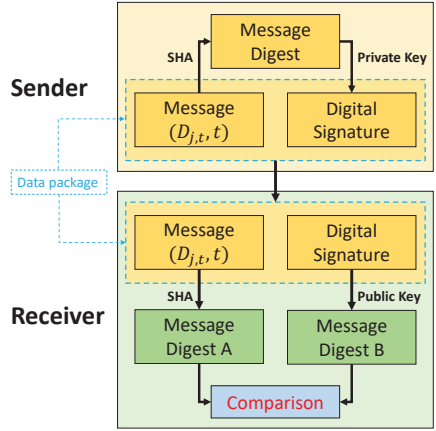}
	\caption{Inter-node data exchanges based on the asymmetric encryption mechanism.}
	\label{fig_encryption}
\end{figure}
\subsubsection{Data Exchanges} In a finitely-long blockchain network, wallets first transmit their data to every miner, and then the miners compete with each other to generate blocks to store wallets' data in the blockchain. The process of generating new blocks for the blockchain will be elaborated in Section \ref{Section_Block_Mining}. Once a miner generates a block, it sends the generated block to all the other nodes which retain local copies of the blockchain. In all these inter-node data exchanges, asymmetric encryption based on a public key infrastructure is used, which is illustrated in Fig. \ref{fig_encryption}. To be specific, each node in the blockchain owns a public-private key pair that forms the digital identity of the node. The public key is available at the other nodes, while the private key is only available at its owner. A secure hash algorithm (SHA), e.g., SHA-256 and SHA-512 \cite{pilkington2016blockchain}, is used in the data encryption process of every data exchange. To delineate the inter-node data exchanges, let's consider the data exchanges between a wallet and a miner as an example. Let $D_{j,t}$ denote the $t$-th data of the $j$-th wallet where time index $t$ where time index $t=1,2,...$ Before transmitting $D_{j,t}$ to miners,  the $j$-th wallet first processes the message which contains data $D_{j,t}$ and its time index $t$ by employing an SHA and obtains a message digest.  It then encrypts the message digest via its private key by using a digital signature algorithm, e.g., Elliptic Curve Digital Signature Algorithm \cite{dinh2018untangling}, and produces a digital signature. Finally, the $j$-th wallet transmits the data package consisting of the message and the corresponding digital signature to all the miners of the network.  

Once a miner receives a data package, it first decrypts the received digital signature via the public key of the wallet and obtains a message digest, which is signified by Message Digest B in Fig. \ref{fig_encryption}. Then, the miner processes the received message via the SHA and obtains another message digest, which is represented by Message Digest A in Fig. \ref{fig_encryption}. Only if these two message digests exactly match with each other, the authenticity of the received data package is verified and the received message will be used in the future processes. Otherwise, the received data package will be discarded and retransmission can take place. Note that the digital signature received at the miner can only be decrypted via the public key of the sender. Hence the miner can verify the identity of the sender to prevent  impersonation of the sender. Moreover, the SHA and the digital signature algorithm can  secure the data package transmission  since it is computationally intractable for an attacker to either find a different message which yields the same message digest or generate a valid digital signature for a fake message digest without the private key of the sender \cite{pilkington2016blockchain, dinh2018untangling}. To this end, the authenticity of the data packages received at  miners and the identities of  senders can be validated and secured.

\subsubsection{Block Mining and Consensus Protocol}\label{Section_Block_Mining} For each time $t$, every miner first collects data packages from all the wallets and verifies the authenticity of the collected data packages. Then every miner puts a header and all the wallets' data packages, that is,  the messages $(D_{j,t},t)$ for all $j$ and the digital signatures associated with these messages, into a new block.  The header consists of a discrete timestamp, the hash value of the last block (parent block) of the longest valid branch in the miner's local copy of the blockchain, the Merkle Root which is the root of the Merkle tree constructed by recursively hashing pairs of data packages until there is only one hash \cite{nakamoto2008bitcoin}, and a number called nonce which is the solution to a PoW puzzle.  The hash value of the parent block which is stored in the header of  the new block in essence cryptographically links the new block to its parent block. A simplified structure of a block is illustrated in Fig. \ref{fig_BlockStructure}. 

Next, the miners compete with each other in solving a difficult PoW puzzle for their new blocks, which is called mining. In Particular, each miner attempts to solve a PoW puzzle by searching for a valid nonce for its new block, via brute-force search, which renders the hash value of the new block with no less than a prescribed number of prefix zeros \cite{nakamoto2008bitcoin}. The difficulty of the PoW puzzle increases as the prescribed number of prefix zeros increases. When a miner solves its PoW puzzle first among all miners, i.e., finds a valid nonce value which meets the requirement, we say that the miner successfully mines its new block, and the miner broadcasts its new block to all the other nodes which retain local copies of blockchain. After receiving the newly mined block, the other nodes carry out a block validation procedure. Specifically, the other nodes first verify the authenticity of the messages in the received block, and confirm that the time index contained in each message of the received block is just one greater than that in the parent block of the received block. Then they verify that the hash value of the received block indeed has no less than  the prescribed number of prefix zeros. If the received block can pass this block validation, all the other nodes will add this received block after its parent block in their local copies of blockchain, and switch to work on solving the PoW puzzle for the next block. The miner which solves its PoW puzzle first will be offered an incentive.

\begin{figure}
	\centering   
	\includegraphics[width=2.5in]{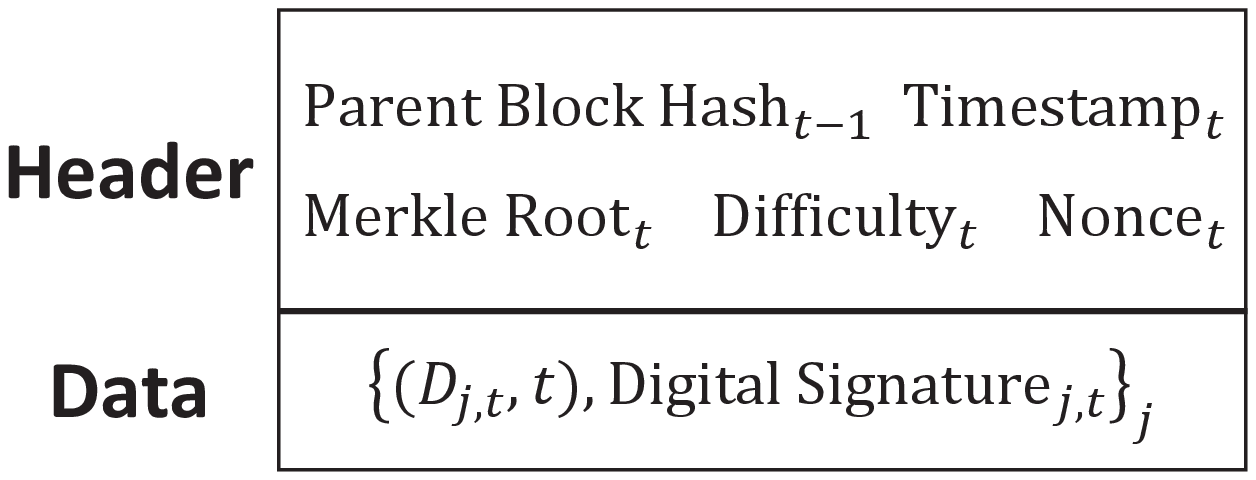}
	\caption{Structure of a block in the blockchain.}
	\label{fig_BlockStructure}
\end{figure}
The block mining process described above can be considered as a hashing competition among miners, where the probability that a miner solves its PoW puzzle first is proportional to its normalized hash rate which is defined as the ratio of its computational power to the total computational power in the network \cite{nakamoto2008bitcoin}. It is possible that two or more miners may solve their puzzles almost at the same time, which may lead to distinct blockchain branches at different miners  due to the decentralized nature of blockchain and  communication delays among nodes. However, the longest branch in each miner's  local copy of the blockchain, which is called the main chain, can reach a consensus due to the longest chain protocol of blockchain\cite{nakamoto2008bitcoin}.

The structure of the wallet's message deserves some discussion. The purpose of including a time index $t$ in every message is mainly twofold. On one hand, if there exist network latency and communication failure,  miners may lose some wallets' data packages or may not receive wallets' data packages in a chronological order. The time index included in every message can help the miners discern if there is any data package lost and rearrange the received wallet's data packages in a chronological order. On the other hand, it is possible that a wallet's data $D_{j,t}$ can be the same value for some different time indices. Including a time index $t$ in every message can distinguish the same data at different times, and also prevent potential security threats. For example, consider the case that every message does not include a time index, and there exists a malicious miner. At a time $t_1$, when forming its new block, the malicious miner can falsify the data package received from the $j$-th wallet by simply replacing it with the $j$-th wallet's data package generated at some time  $t_2<t_1$, which can be obtained in the malicious miner's copy of the blockchain. If this falsified block is mined, this falsification can pass the validation processes implemented at other nodes since the $j$-th wallet's data package generated at $t_2$ is authentic and valid. On the contrary, if every message includes a time index, then this kind of falsification cannot pass the validation processes implemented at other nodes since the time index $t_2$ contained
in the falsified  mined block is not one greater than $t_1-1$ which is the time index in the parent block of the mined block. 

As mentioned before, in a blockchain application where the task of the application has to be accomplished by some time instant, the employed blockchain can be deemed finitely-long. Let $L$ denote the number of blocks mined in the longest branch of the employed blockchain when the task of the blockchain application is accomplished. As such, in the finitely-long blockchain model, we can assume that the growth of the blockchain terminates once its longest branch grows to $L$ blocks.

\subsection{Double-Spending Attack Model}\label{Section_Adversarymodel}

In order to falsify the data which have already been stored in a blockchain, an attacker has to successfully launch a DSA \cite{nakamoto2008bitcoin}. Consider a finitely-long blockchain where $L_0$ authentic blocks which store data for the task of the application have been mined to form an authentic branch.
An attacker controls some malicious miners in the network which launches a DSA on the finitely-long blockchain in an attempt to falsify the data $\{D_{j,L_a}\}_{j\in\mathcal{A}}$ from a subset $\cA$ of wallets, which have already been stored in the $L_a$-th block $(0<L_a\le L_0)$ of the authentic branch of the blockchain, to $\{\tilde{D}_{j,L_a}\}_{j\in\mathcal{A}}$. The steps of launching the DSA on the finitely-long blockchain can be summarized as follows. The attacker first has to hack into the subset $\cA$ of wallets and steal their private keys to generate valid digital signatures for the falsified messages containing $\{\tilde{D}_{j,L_a}\}_{j\in\mathcal{A}}$. Then the attacker modifies the $L_a$-th authentic block to form a counterfeit block by replacing the $L_a$-th authentic block's data packages containing $\{D_{j,L_a}\}_{j\in\mathcal{A}}$ with the falsified data messages containing $\{\tilde{D}_{j,L_a}\}_{j\in\mathcal{A}}$ and their corresponding digital signatures. After that, the attacker works on mining this counterfeit block by redoing a PoW puzzle to find a valid nonce value for the counterfeit block which can render the hash value of the counterfeit block with no less than a prescribed number of prefix zeros. Once a valid nonce value has been successfully found, the attacker broadcasts this mined counterfeit block to the other nodes which retain local copies of the blockchain. Since the mined counterfeit block can pass the block validation conducted at the other nodes, the other nodes will add this counterfeit block after the $(L_a-1)$-th authentic block in their local copies of the blockchain. In a blockchain, only the longest branch is valid according to the longest chain protocol of blockchain. To this end, in order to ensure the validity of the $L_a$-th counterfeit block, the attacker has to mine more blocks after the $L_a$-th counterfeit block which are linked one after another to form a counterfeit branch, and extend this counterfeit branch to be the longest branch in the blockchain. The process of launching a DSA is illustrated in Fig. \ref{fig_DSA}. It is worth mentioning that mining each block of the counterfeit branch requires the malicious miners to solve a PoW puzzle. Moreover, while the malicious miners work on hacking into the subset $\cA$ of wallets and then extending the counterfeit branch, the honest miners simultaneously work on mining new blocks to extend the authentic branch before the counterfeit branch becomes the longest branch in the blockchain.

As illustrated in Fig. \ref{fig_DSA}, the counterfeit branch consists of the authentic blocks with indices from $1$ to $(L_a-1)$, the $L_a$-th counterfeit block, and all blocks linked after it. The authentic branch consists of the authentic blocks with indices from 1 to $L_0$ and all blocks linked after the $L_0$-th authentic block. It is clear that the counterfeit branch diverges from the authentic branch at the $(L_a-1)$-th block of the authentic branch. Once the counterfeit branch becomes longer than the authentic branch, all the honest miners switch from working on extending the authentic branch to working on extending the counterfeit branch according to the longest chain protocol of blockchain \cite{nakamoto2008bitcoin}. As a result, the authentic branch will stop growing, and the counterfeit branch will remain the longest branch in the blockchain as time goes by.

At last, we make a remark on the value of $L_0$ in practice. When wallets communicate with the malicious miners which are controlled by an attacker, it provides the attacker an opportunity to hack into these wallets. Generally, cybersecurity measures are taken at wallets to prevent attackers from hacking into the wallets and stealing their private keys. For example, as a basic preventative measure, most wallets are equipped with password protection to prevent hacking. In order to hack into the wallets equipped with password protection, the attacker has to employ a brute force attack which works through all possible keystrokes hoping to guess the password correctly\cite{stiawan2019investigating}. It may take a long time for the attacker to hack into all wallets in the subset $\cA$, during which  honest miners  generally can successfully mine some authentic blocks for an authentic branch, which implies that $L_0$ is generally greater than $0$ in practice owing to the cybersecurity measures  taken at  wallets.
In addition, the cases where $L_0 = L$ or $L_0=0$  are trivial. This is because if $L_0 = L$, then the task of the blockchain application has already been accomplished, and hence, the attacker has failed to launch a successful DSA. Moreover, if no authentic block has been mined in the blockchain, i.e., $L_0=0$, when the attacker accomplishes hacking into the subset $\cA$ of wallets, then no block storing the data for the task of the blockchain application has been mined in the blockchain, and therefore,  the attacker does not need to launch a DSA on the blockchain anymore.
This is because the attacker has already controlled the subset $\cA$ of wallets and can falsify their data which are transmitted to miners before the blockchain records the data for the task of the blockchain application. In light of this, we assume that $0<L_0 < L$ throughout this paper.

\begin{figure*}
	\centering   
	\includegraphics[width=1.0 \textwidth]{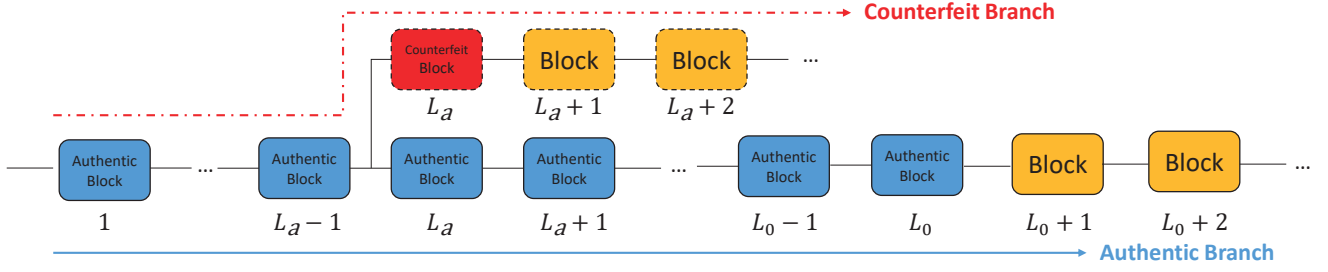}
	\caption{The finitely-long blockchain under a double-spending attack.}
	\label{fig_DSA}
\end{figure*}

\section{Probability of Success in Launching A Double-Spending Attack}\label{Section_Probability_DSA}
Consider the double-spending attack model described in Section \ref{Section_Adversarymodel} where $L_0$ blocks have been mined in an authentic branch and an attacker attempts to falsify the data stored in the $L_a$-th authentic block by launching a DSA. The attacker has to build a counterfeit branch to be longer than the authentic branch before the task of a finitely-long blockchain application is accomplished, that is, the time instant at which the authentic branch grows to $L$ blocks. 
We refer to a DSA which renders its counterfeit branch longer than the authentic branch before the authentic branch grows to $L$ blocks as a successful DSA.

According to the PoW protocol, the probability that a miner solves its PoW puzzle first among all the miners is proportional to its hash rate.  Let $I$ denote the probability that the next block mined in the blockchain is generated by a malicious miner. Hence the probability that the next block mined in the blockchain is generated by an honest miner is $1-I$. Let $\bbP_L(m,n)$ denote the probability that if the length of the authentic branch is $n$ and the counterfeit branch built by the attacker is $m$ blocks shorter than the authentic branch, then the attacker finally extends the counterfeit branch to be longer than the authentic branch before the authentic branch grows to $L$ blocks. As such,  $\bbP_L(L_0-L_a+1,L_0)$  is the probability that the attacker launches a successful DSA.

Note that if a malicious miner mines the next block in the blockchain, which happens with probability $I$,  the counterfeit branch will be  $(m-1)$ blocks shorter than the authentic branch while the authentic branch remains $n$ blocks. In contrast, if an honest miner mines the next block in the blockchain, which happens with probability $(1-I)$, the counterfeit branch will be $(m+1)$ blocks shorter than the authentic branch and the authentic branch grows to $(n+1)$ blocks. To this end, 
the recursive form of $\bbP_L(m,n)$ can be expressed as,  $\forall m \in \{0,\cdots, L_0-L_a+1\}$ and  $\forall n \in \{L_0,\cdots, L\}$, 
\begin{equation} \label{equ_2D}
\bbP_L(m,n)=I\times \bbP_L(m-1,n)+(1-I)\times \bbP_L(m+1,n+1).
\end{equation}
 
Once the counterfeit branch becomes longer than the authentic branch (i.e., the counterfeit branch is $-1$ blocks shorter than the authentic branch) before the length of the authentic branch reaches $L$, the DSA launched by the attacker succeeds, and therefore, we have the following boundary condition
\begin{equation} \label{boundary_condition_1}
	\bbP_L(-1,n)=1, \quad \forall 0<n<L.
\end{equation}
On the other hand, if the length of the authentic branch grows to $L$ and the counterfeit branch is still shorter than the authentic branch, the blockchain reaches its final length which implies that the attacker fails to launch a successful DSA. Hence, we have another boundary condition
\begin{equation}\label{boundary_condition_2}
	\bbP_L(m,L)=0, \quad \forall m>0.
\end{equation}
It is worth pointing out that when an honest miner mines a block and then the length of the authentic branch grows to $L$, the length of the counterfeit branch cannot be the same as that of the authentic branch. This is because if so, then before the honest miner mines a block, the counterfeit branch has already grown to $L$ blocks, and hence, the counterfeit branch has already been longer than the authentic branch which implies that we have already reached the boundary condition in (\ref{boundary_condition_1}) before the honest miner mines a block. To this end, we don't need to include the case where $m=0$ in the boundary condition in (\ref{boundary_condition_2}).
\begin{figure}\centering 
	\includegraphics[width=0.45\textwidth]{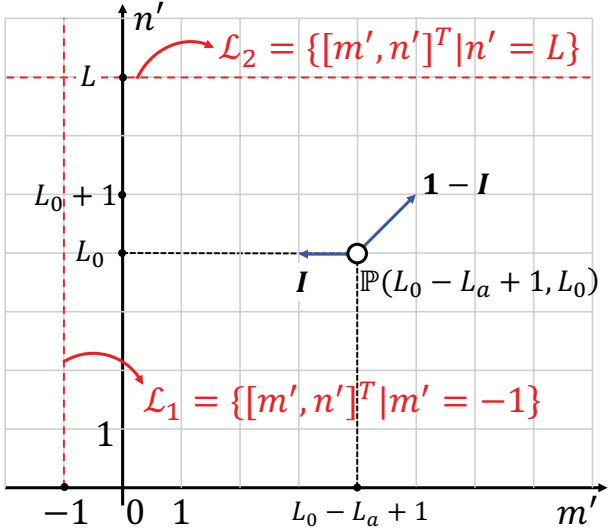}
	\caption{Two-dimensional random walk illustration.}
	\label{Fig_random_walk}
\end{figure}

By employing (\ref{equ_2D}), (\ref{boundary_condition_1}) and (\ref{boundary_condition_2}), the pursuit of the closed-form expression for $\bbP_L(m,n)$  can be cast as a two-sided boundary hitting problem for a two-dimensional random walk with two possible moving directions, which is illustrated in Fig. \ref{Fig_random_walk}. To be specific, let ${\boldsymbol s}_T \triangleq [m,n]^T + \sum_{t=1}^{t=T} \bdelta_t$ denote the position of the random walk in two-dimensional space after the $T$-th step (i.e., after $T$ blocks have been mined in the blockchain), where for any $t$, $\bdelta_t$ is a random vector 
\begin{equation}
	{\bdelta _t} = \left\{ {\begin{array}{*{20}{c}}
{\bdelta^{(1)} \triangleq  {{[ - 1,0]}^T}}&{\text{with probability }I,}\\
{\bdelta^{(2)} \triangleq {{[1,1]}^T}}&{\text{with probability } 1 - I.}
\end{array}} \right.
\end{equation}
Let $\cL_1  \buildrel \Delta \over = \{ {[m',n']^T}\left| {m' =  - 1} \right.\}$ and ${\cL_2} \buildrel \Delta \over = \{ {[m',n']^T}\left| {n' = L} \right.\}$ define two boundary lines in  two-dimensional space, respectively.  As such, $\bbP_L(m,n)$ can be rewritten  as
\begin{equation} \label{P_m_n_closed_form}
	\bbP_L(m,n) = \sum_{T = 0}^\infty  {\Pr \left( {{\bs_T} \in {\cL_1},{\bs_{T'}} \notin {\cL_1} \cup {\cL_2},\;\forall T' < T} \right)}
\end{equation}

The closed-form expression for $\bbP_L(m,n)$ for any given $m,n$ is described in the following theorem.
\begin{theorem}\label{Theorem_DSA}
When the length of the authentic branch is $n$ and the counterfeit branch is $m$ blocks shorter than the authentic branch, the probability $\bbP_L(m,n)$ that an attacker launches a successful DSA can be expressed as
\begin{equation}\setlength{\arraycolsep}{1pt}
\begin{aligned} \label{P_m_n_theorem}
&\bbP_L(m,n)\\
&=\left\{\begin{array}{{c}{l}}\sum_{i=0}^{L-n-1}a_{i,m}(1-I)^iI^{m+1+i}&,\begin{array}{l}\text{if }m\ge0,0<n<L,\\
\text{and }0\le I<1\end{array}\\
1&,\begin{array}{l}\text{if }m=-1,0\!<\!n\!<\!L,\\
\text{or }I=1,0<n<L\end{array}\\
0&,\text{if }m>0,n=L\end{array}\right.
\end{aligned}
\end{equation}
where the coefficients $a_{i,m}$ is defined as
\begin{equation}\label{a_i_m_theorem}
\setlength{\arraycolsep}{1pt}
{a_{i,m}} = \left\{ {\begin{array}{*{5}{cl}}
	{1,}&{{\rm{if }} \; i = 0,}\\
	{1 + m,}&{{\rm{if }}\; i = 1,}\\
	{{C_i},}&{{\rm{if }}\; m = 0,}\\
	{{C_{i + 1}},}&{{\rm{if }}\; m = 1,}\\
	\begin{array}{l}
	{C_{i + 1}} + \sum\limits_{{j_1} = 3}^{m + 1} {{C_i}} + \sum\limits_{{j_1} = 3}^{m + 1} {\sum\limits_{{j_2} = 3}^{{j_1} + 1} {{C_{i - 1}}} }\\
 \quad    +  \cdots  
	   + \sum\limits_{{j_1} = 3}^{m + 1} {\sum\limits_{{j_2} = 3}^{{j_1} + 1}  \cdots  } \sum\limits_{{j_{i - 2}} = 3}^{{j_{i - 3}} + 1} {{C_3}} \\
	\quad   + \sum\limits_{{j_1} = 3}^{m + 1} {\sum\limits_{{j_2} = 3}^{{j_1} + 1}  \cdots  } \sum\limits_{{j_{i - 1}} = 3}^{{j_{i - 2}} + 1} {(1 + {j_{i-1}})} ,
	\end{array}&\ \begin{aligned}&{\rm{if }} \; i \! > \!1\\&\rm{and}\;m > 1\end{aligned},
	\end{array}} \right.
\end{equation} 	
and the constant $C_i$ is the $i$-th Catalan number which is given by
\begin{equation}
	C_i \buildrel \Delta \over =  \frac{1}{i+1}\left(\begin{matrix}2i\\i\end{matrix}\right) = \frac{{(2i)!}}{{(i + 1)!i!}}.
\end{equation}
\end{theorem}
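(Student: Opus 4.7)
The plan is to recast the recursion (\ref{equ_2D}) with its boundary conditions (\ref{boundary_condition_1})--(\ref{boundary_condition_2}) as a first-passage problem for the two-dimensional random walk in Fig.~\ref{Fig_random_walk}, then reduce it to a one-dimensional path-counting problem via the cycle lemma. Three branches of (\ref{P_m_n_theorem}) are trivial: $m = -1$ and $n = L$ come directly from the boundary conditions, while $I = 1$ forces every step to be $\bdelta^{(1)}$, so the $m$-coordinate almost surely reaches $-1$ without the $n$-coordinate ever moving.

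For the main case $m \ge 0$, $0 < n < L$, $0 \le I < 1$, the key observation is that the $n$-coordinate is non-decreasing along every sample path, while the $m$-coordinate executes an independent $\pm 1$ random walk with down-probability $I$. A path contributing to the event in (\ref{P_m_n_closed_form}) uses $b$ type-$\bdelta^{(2)}$ steps and exactly $m+1+b$ type-$\bdelta^{(1)}$ steps, has length $T = m+1+2b$, ends at $(-1, n+b)$, and realizes with probability $I^{m+1+b}(1-I)^b$. For it to be a first-passage path to $\cL_1$ that avoids $\cL_2$, the final step must be of type $\bdelta^{(1)}$, every proper prefix must keep the $m$-coordinate $\ge 0$, and the monotonicity of the $n$-coordinate collapses the $\cL_2$-avoidance constraint to $b \le L - n - 1$. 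Counting such paths is equivalent to counting $\pm 1$ walks that start at $m$ and first hit $-1$ at step $T$ using exactly $b$ up-steps; after shifting by $+1$, the cycle lemma (equivalently the Bertrand ballot theorem) delivers $\tfrac{m+1}{m+1+2b}\binom{m+1+2b}{b}$ such walks. Summing over $b \in \{0,\ldots,L-n-1\}$ yields
\[
\bbP_L(m,n) = \sum_{b=0}^{L-n-1} \frac{m+1}{m+1+2b}\binom{m+1+2b}{b} I^{m+1+b}(1-I)^b.
\]

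It then remains to identify $a_{i,m} = \tfrac{m+1}{m+1+2i}\binom{m+1+2i}{i}$ with the nested-sum expression in (\ref{a_i_m_theorem}). The special cases $i \in \{0,1\}$ and $m \in \{0,1\}$ reduce by direct factorial manipulation to $1$, $m+1$, $C_i$, and $C_{i+1}$, respectively, in agreement with the first four rows of (\ref{a_i_m_theorem}). For $i > 1$ and $m > 1$, I would establish the two-term recursion $a_{i,m} = a_{i,m-1} + a_{i-1,m+1}$, which also falls out of substituting the ansatz into (\ref{equ_2D}) and equating coefficients of $I$ and $1-I$, and then iterate it $i-1$ times: each unfolding replaces a term $a_{i-k,m+k}$ by a new nested summation with range $3 \le j_s \le j_{s-1}+1$, terminating at the boundary values $a_{1,j_{i-1}} = 1 + j_{i-1}$, reproducing the nested sums in (\ref{a_i_m_theorem}) line by line.

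The main obstacle lies in the first-passage count rather than the probabilistic reduction: the cycle lemma applies cleanly only because $\cL_2$-avoidance depends on the aggregate up-step count $b$ and not on the positions of those up-steps, which decouples the $m$-coordinate first-passage count from the $n$-coordinate truncation; absent that decoupling one would need a genuinely bivariate ballot argument. Matching the clean binomial form to the nested-sum display in (\ref{a_i_m_theorem}) is mechanical but error-prone, so I would formalize the unfolding by induction on $i$ while carefully tracking the shifting summation ranges.
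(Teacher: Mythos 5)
Your proposal is correct, and it reaches the theorem by a genuinely different key step than the paper. Both arguments share the same skeleton: the path-sum representation (\ref{P_m_n_closed_form}), the observation that the $n$-coordinate is monotone so avoidance of $\cL_2$ collapses to the cap $i\le L-n-1$ on the number of $\bdelta^{(2)}$ steps, and the interpretation of $a_{i,m}$ as a first-passage path count that does not depend on $L$ or $n$. Where you diverge is in how $a_{i,m}$ is evaluated. The paper never produces a closed form for general $(i,m)$: it substitutes the ansatz into the recursion (\ref{equ_2D}), invokes the fundamental theorem of algebra to extract $a_{i,m}=a_{i,m-1}+a_{i-1,m+1}$ and $a_{i,0}=a_{i-1,1}$, pins down only the $m=0$ column via Dyck paths ($a'_{i,0}=C_i$), needs the auxiliary boundary $\cL_3$ to justify the top index $i=L-n-1$, and then leaves $a_{i,m}$ in the nested-sum form of (\ref{a_i_m_theorem}). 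You instead apply the cycle lemma (hitting-time theorem) to get the explicit ballot-number formula $a_{i,m}=\tfrac{m+1}{m+1+2i}\binom{m+1+2i}{i}$ for all $i,m\ge 0$, which subsumes the paper's special cases ($i=0,1$ give $1$ and $m+1$; $m=0,1$ give $C_i$ and $C_{i+1}$) and trivially satisfies the same two-term recursion, so the nested-sum display follows by the same unfolding the paper performs; this also sidesteps the $\cL_3$ device entirely, since your count is defined and computed combinatorially without reference to the truncation. What each approach buys: yours yields a strictly more informative closed form and a shorter route to correctness at the cost of importing the ballot/cycle-lemma machinery; the paper's route uses only Dyck-path counting and polynomial identification but ends with a less explicit answer and some extra bookkeeping at the boundary index. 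Your final matching step (induction on $i$ while tracking the ranges $3\le j_s\le j_{s-1}+1$, terminating at $a_{1,j_{i-1}}=1+j_{i-1}$) is exactly the paper's iteration of (\ref{a_i_m_1}) and (\ref{a_i_1_2}), so no gap remains there.
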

\begin{proof}[Proof:\nopunct]
From (\ref{equ_2D}) and (\ref{boundary_condition_1}), we can easily see that $\bbP_L(m,n)=1$ if $m=-1$, or if $I=1$ and $0<n<L$, $\bbP_L(m,n)=0$ if $n=L$. Next, we prove the case that $0\le I<1$ and $0<n<L$.

	As illustrated in Fig. \ref{Fig_random_walk}, if the random walk can hit the boundary line $\cL_1$ before hitting the boundary line $\cL_2$, all the possible cases are that $\forall i=0,1,...,L-n-1$, the random walk moves along $\bdelta^{(1)}$ by $(m+1+i)$ steps and moves along $\bdelta^{(2)}$ by $i$ steps.
Thus, $\bbP_L(m,n)$ can be written in the general form
\begin{equation} \label{equ_genForm}
\bbP_L(m,n)=\sum_{i=0}^{L-n-1}a_{i,m}(1-I)^iI^{m+1+i}
\end{equation}
where $a_{i,m}$ is a constant which denotes the number of all the paths that the random walk hits the boundary $\cL_1$ in the $(m+2i+1)$-th step by moving $(m+i+1)$ steps along $\bdelta^{(1)}$  and $i$ steps along $\bdelta^{(2)}$ and the random walk never hits $\cL_1$ before the $(m+2i+1)$-th step.  It is obvious that 
\begin{equation} \label{a_0_m}
	a_{0,m}=1 \;\; \text{for  } m \ge 0,
\end{equation}
and hence
\begin{equation} \label{P_m_n_alt}
	\bbP_L(m,n)=I^{(m+1)} + \sum_{i=1}^{L-n-1}a_{i,m}(1-I)^iI^{m+1+i}.
\end{equation}
By substituting (\ref{equ_genForm}) and (\ref{P_m_n_alt}) into (\ref{equ_2D}), we can obtain
\begin{align} \notag
&I^{m+1}+\sum_{i=1}^{L-n-1}a_{i,m}(1-I)^iI^{m+1+i}\\ \notag
&=I\times\left(\sum_{i=0}^{L-n-1}a_{i,m-1}(1-I)^iI^{m+i}\right) \\ \notag
&\qquad  +(1-I)\times \left(\sum_{i=0}^{L-n-2}a_{i,m+1}(1-I)^iI^{m+2+i}\right)\\  \notag
&=\sum_{i=0}^{L-n-1}a_{i,m-1}(1-I)^iI^{m+1+i}\\\notag
&\qquad+\sum_{i=0}^{L-n-2}a_{i,m+1}(1-I)^{i+1}I^{m+2+i}\\  \notag
&=I^{m+1}+\sum_{i=1}^{L-n-1}a_{i,m-1}(1-I)^iI^{m+1+i}\\  \notag
&\qquad  +\sum_{i=1}^{L-n-1}a_{i-1,m+1}(1-I)^{i}I^{m+1+i}\\ \label{I_poly}
&=I^{m+1}+\sum_{i=1}^{L-n-1}[a_{i,m-1}+a_{i-1,m+1}](1-I)^iI^{m+1+i}.
\end{align}
Since (\ref{I_poly}) holds for any $I$, by the fundamental theorem of algebra, we know 
\begin{equation} \label{equ_Recur}
a_{i,m}=a_{i,m-1}+a_{i-1,m+1}, \;\; \forall i>0 \text{ and }m\ge0.
\end{equation}
Since $a_{0,m}=1$ for any $m \ge 0$, we can get \begin{equation} \label{a_1_m}
	a_{1,m}=a_{1,m-1}+1 = a_{1,0} + m ,\;\; \forall m\ge 0
\end{equation}
from (\ref{equ_Recur}) by choosing $i=1$.
Furthermore, by setting $m=0$ in (\ref{equ_2D}), we have 
\begin{equation}
	\bbP_L(0,n)=I+(1-I)\times \bbP_L(1,n+1)
\end{equation}
which yields 
\begin{equation}\label{I_poly_2}
I+\sum_{i=1}^{L-n-1}a_{i,0}(1-I)^{i}I^{1+i}=I+\sum_{i=1}^{L-n-1}a_{i-1,1}(1-I)^{i}I^{1+i}
\end{equation}
by employing (\ref{equ_genForm}).

By the fundamental theorem of algebra, we know from (\ref{I_poly_2}) that
\begin{equation} \label{a_recursive}
a_{i,0}=a_{i-1,1}, \;\; \forall i=1,...,L-n-1.
\end{equation}
From (\ref{a_0_m}) and (\ref{a_recursive}), we can obtain
\begin{equation}
	a_{1,0}=a_{0,1}=1,
\end{equation}  
and therefore, 
\begin{equation} \label{a_1_m_final}
	a_{1,m}=1+m \;\; \forall m\ge 0
\end{equation}
by employing (\ref{a_1_m}).
From the recursive equation in (\ref{equ_Recur}), we can obtain that for any $i =1,2,...,L-n-1$,
\begin{equation} \label{a_i_m_1}
	a_{i,m}=a_{i,1}+\sum_{j=3}^{m+1}a_{i-1,j}, \;\forall m>1.
\end{equation}
Moreover, from (\ref{a_recursive}), we know
\begin{equation} \label{a_i_1_2}
	a_{i,1}=a_{i+1,0}, \;\; \forall i=0,...,L-n-2,
\end{equation} 
which implies that for all $i=1,...,L-n-2$ and $m > 1$,
\begin{align} \notag
a_{i,m}&=a_{i+1,0}+\sum_{j_1=3}^{m+1}a_{i-1,j_1}\\  \notag
&=a_{i+1,0}+\sum_{j_1=3}^{m+1}a_{i,0}+\sum_{j_1=3}^{m+1}\sum_{j_2=3}^{j_1+1}a_{i-1,0}+\cdots\\ \notag
&\quad+\sum_{j_1=3}^{m+1}\sum_{j_{2}=3}^{j_{2}+1}\cdots\sum_{j_{i-2}=3}^{j_{i-3}+1}a_{3,0}+\sum_{j_1=3}^{m+1}\sum_{j_{2}=3}^{j_{2}+1}\cdots\sum_{j_{i-1}=3}^{j_{i-2}+1}a_{1,j_{i-1}} \\ \notag
&=a_{i+1,0}+\sum_{j_1=3}^{m+1}a_{i,0}+\sum_{j_1=3}^{m+1}\sum_{j_2=3}^{j_1+1}a_{i-1,0}+\cdots\\ \notag
&\quad+\sum_{j_1=3}^{m+1}\sum_{j_{2}=3}^{j_{2}+1}\cdots\sum_{j_{i-2}=3}^{j_{i-3}+1}a_{3,0} \\  \label{a_i_m_final_1}
& \quad +\sum_{j_1=3}^{m+1}\sum_{j_{2}=3}^{j_{2}+1}\cdots\sum_{j_{i-1}=3}^{j_{i-2}+1}(1+j_{i-1}),
\end{align}
by employing (\ref{a_1_m_final}) and (\ref{a_i_m_1}).

Next, we need to determine $a_{i,m}$ when $i=L-n-1$.
Suppose that there is another boundary line ${\cL_3} \buildrel \Delta \over = \{ {[m',n']^T}\left| {n' = L+1} \right.\}$ and $\bbP_L'(m,n)$ is the probability that the random walk hits the boundary line $\cL_1$ before hitting the boundary line $\cL_3$. Similar to (\ref{equ_genForm}), we know 
$$\bbP_L'(m,n)=\sum_{i=0}^{L-n}a'_{i,m}(1-I)^iI^{m+1+i}$$
where $a'_{i,m}$ is a constant which denotes the number of all the paths that the random walk hits the boundary $\cL_1$ in the $(m+2i+1)$-th step by moving $(m+1+i)$ steps along $\bdelta^{(1)}$  and $i$ steps along $\bdelta^{(2)}$ and the random walk never hits $\cL_1$  before the $(m+2i+1)$-th step. It is clear that 
\begin{equation} \label{a_prime_id}
	a'_{i,m}=a_{i,m}, \;\; \forall i=0,...,L-n-1 \text{ and }\forall m\ge 0.
\end{equation}
Similar to (\ref{a_i_1_2}), we can get 
\begin{equation} \label{a_prime_rec}
	a'_{i,1}=a'_{i+1,0}, \;\; \forall i=0,...,L-n-1.
\end{equation}
From (\ref{a_prime_id}) and (\ref{a_prime_rec}), we know
\begin{equation}
a_{L-n-1,1}=a'_{L-n-1,1}=a'_{L-n,0}.
\end{equation}
Similar to (\ref{a_i_m_final_1}), by employing (\ref{a_prime_id}), we can obtain that for all $m > 1$,
\begin{align} \notag
a_{L-n-1,m} & = a'_{L-n-1,m} \\ \notag
&=a'_{L-n,0}+\sum_{j_1=3}^{m+1}a_{L-n-2,j_1}\\ \notag
&=a'_{L-n,0}+\sum_{j_1=3}^{m+1}a_{L-n-1,0}+\sum_{j_1=3}^{m+1}\sum_{j_2=3}^{j_1+1}a_{L-n-2,0}\\ \notag
&\qquad+\cdots+\sum_{j_1=3}^{m+1}\sum_{j_{2}=3}^{j_{2}+1}\cdots\sum_{j_{L-n-3}=3}^{j_{L-n-4}+1}a_{3,0} \\ \label{a_i_m_final_2}
& \qquad +\sum_{j_1=3}^{m+1}\sum_{j_{2}=3}^{j_{2}+1}\cdots\sum_{j_{L-n-2}=3}^{j_{L-n-3}+1}(1+j_{L-n-2}).
\end{align}

It is seen from (\ref{a_0_m}), (\ref{a_i_m_final_1}), (\ref{a_prime_id}) and (\ref{a_i_m_final_2}) that if we can determine the value of $a'_{i,0}$ for all $i=3,4,...L-n$, then we can determine the value of $a_{i,m}$ for all $i=0,1,...L-n-1$ and $\forall m \ge 0$, and hence, we can obtain the closed-form expression for $\bbP_L(m,n)$ by using (\ref{equ_genForm}). In what follows, we will derive the closed-form expression for $a'_{i,0}$ for all $i=3,4,...L-n$. 



Note that  $a'_{i,0}$ denotes the number of all the paths that the random walk hits the boundary line $\cL_1$ in the $(2i+1)$-th step by moving $(i+1)$ steps along $\bdelta^{(1)}$  and $i$ steps along $\bdelta^{(2)}$ and the random walk never hits $\cL_1$ before the $(2i+1)$-th step. We know that for any such path in two-dimensional space, it starts from the point $[0,n]^T$ and arrives at the point $[0,n+i]^T$  in the $(2i)$-th step, and moreover, the first $2i$ steps of the path must stay in the half space $\cS \triangleq  \{ {[m',n']^T}\left| {m' \ge 0} \right.\}$.
In light of this, $a'_{i,0}$ is identical to the number of all the lattice paths from the point $[0,0]^T$ to the point $[i,i]^T$ which consist of $i$  steps along the vector $[0,1]^T$ and $i$  steps along the vector $[1,0]^T$ and never rise above the diagonal line in the $i$-by-$i$ grid. 
Such paths are referred to as the Dyck Paths, and the number of such paths is known as the $i$-th Catalan number $C_i$ \cite{stanley2015catalan}. Therefore, we know
\begin{equation} \label{a_prime_c_i}
	a'_{i,0}=C_i \triangleq  \frac{1}{i+1}\left(\begin{matrix}2i\\i\end{matrix}\right) = \frac{{(2i)!}}{{(i + 1)!i!}}, \;\; \forall i=3,...,L-n-1.
\end{equation}
By employing (\ref{boundary_condition_1}), (\ref{boundary_condition_2}), (\ref{equ_genForm}), (\ref{a_0_m}), (\ref{a_1_m_final}), (\ref{a_i_m_final_1}), (\ref{a_prime_rec}),  (\ref{a_i_m_final_2}) and (\ref{a_prime_c_i}), we can obtain
\begin{equation}\setlength{\arraycolsep}{1pt}
\begin{aligned}
&\bbP_L(m,n)\\
&=\left\{\begin{array}{{c}{l}}\sum_{i=0}^{L-n-1}a_{i,m}(1-I)^iI^{m+1+i}&,\begin{array}{l}\text{if }m\ge0,0<n<L,\\
\text{and }0\le I<1\end{array}\\
1&,\begin{array}{l}\text{if }m=-1,0\!<\!n\!<\!L,\\
\text{or }I=1,0<n<L\end{array}\\
0&,\text{if }m>0,n=L\end{array}\right.
\end{aligned}
\end{equation}
where the coefficients $a_{i,m}$ can be expressed as

\begin{equation}
\setlength{\arraycolsep}{1pt}
{a_{i,m}} = \left\{ {\begin{array}{*{5}{c}}
	{1,}&{{\rm{if }} \; i = 0,}\\
	{1 + m,}&{{\rm{if }}\; i = 1,}\\
	{{C_i},}&{{\rm{if }}\; m = 0,}\\
	{{C_{i + 1}},}&{{\rm{if }}\; m = 1,}\\
	\begin{array}{l}
	{C_{i + 1}} + \sum\limits_{{j_1} = 3}^{m + 1} {{C_i}} + \sum\limits_{{j_1} = 3}^{m + 1} {\sum\limits_{{j_2} = 3}^{{j_1} + 1} {{C_{i - 1}}} }\\
 \quad   +  \cdots  
	   + \sum\limits_{{j_1} = 3}^{m + 1} {\sum\limits_{{j_2} = 3}^{{j_1} + 1}  \cdots  } \sum\limits_{{j_{i - 2}} = 3}^{{j_{i - 3}} + 1} {{C_3}} \\
	\quad   + \sum\limits_{{j_1} = 3}^{m + 1} {\sum\limits_{{j_2} = 3}^{{j_1} + 1}  \cdots  } \sum\limits_{{j_{i - 1}} = 3}^{{j_{i - 2}} + 1} {(1 + {j_{i-1}})} ,
	\end{array}&\ \begin{aligned}&{\rm{if }} \; i  > 1\\&\rm{and}\;m > 1\end{aligned},
	\end{array}} \right.
\end{equation} 	
which completes the proof.

\end{proof}

By substituting $L_0-L_a+1$ and $L_0$ for $m$ and $n$ in the general closed-form expression in (\ref{P_m_n_theorem}), respectively, we can obtain the probability that the attacker launches a successful DSA on the finitely-long blockchain. Intuitively speaking, in the asymptotic regime where $L\to\infty$, the boundary line $\cL_2$ in Fig. \ref{Fig_random_walk} essentially does not exist, and hence, the two-dimensional random walk boundary hitting problem illustrated in (\ref{P_m_n_closed_form}) degenerates to a one-dimensional random walk boundary hitting problem. In consequence, the probability that the attacker launches a successful DSA on a finitely-long blockchain is expected to converge to that on an infinitely-long blockchain as $L\to\infty$. This intuitive conjecture is formally summarized and rigorously proved in the following theorem.

\begin{theorem}\label{Theorem_asymptotic}
If $0<n<L$, then \cbk$\bbP_L(m,n)$ strictly increases as $L$ increases when $0<I<1$,  
 and is constant when $I=0$ or $1$. Moreover, when $0<n<L$, as $L \to \infty$, the probability $\bbP_L(m,n)$ in (\ref{P_m_n_closed_form}) converges to
\begin{equation} \label{P_m_n_asymptotic}
\lim_{L\to\infty}\bbP_L(m,n)=\left\{{\begin{array}{*{5}{c}} 
\left(\frac{I}{1-I}\right)^{m+1}&,\text{ if }\;0\le I<0.5,\\
1&,\text{ if }\;0.5\le I\le 1.
\end{array}}\right.
\end{equation}
which equals the probability that an attacker launches a successful DSA on an infinitely-long blockchain when the counterfeit branch is $m$ blocks shorter than the authentic branch  \cite{nakamoto2008bitcoin,grunspan2018double,zaghloul2020bitcoin} .
\end{theorem}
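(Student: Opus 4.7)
The plan is to split the proof into two independent pieces: the monotonicity in $L$ which follows from a one-line calculation on the closed-form expression in Theorem \ref{Theorem_DSA}, and the computation of the limit which I would reduce, via a monotone-convergence argument, to a classical one-dimensional gambler's ruin problem.

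For monotonicity, the first branch of (\ref{P_m_n_theorem}) gives
\begin{equation*}
\bbP_{L+1}(m,n)-\bbP_L(m,n)=a_{L-n,m}(1-I)^{L-n}I^{m+1+L-n}.
\end{equation*}
Because each $a_{i,m}$ counts a non-empty family of lattice paths and is therefore a positive integer, the right-hand side is strictly positive whenever $0<I<1$, yielding strict monotonicity in $L$. When $I=0$ every summand of (\ref{P_m_n_theorem}) carries a factor $I^{m+1+i}$ with $m+1+i\ge 1$ and hence vanishes, so $\bbP_L(m,n)\equiv 0$; when $I=1$ the second branch of (\ref{P_m_n_theorem}) gives $\bbP_L(m,n)\equiv 1$. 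In both extremes the probability is constant in $L$.

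For the limit, monotonicity together with the a priori bound $\bbP_L(m,n)\le 1$ guarantees convergence. To identify the limit, let $E_L$ denote the event that the two-dimensional random walk starting from $[m,n]^T$ touches $\cL_1$ strictly before $\cL_2$ at some finite step, so that $\bbP_L(m,n)=\Prob(E_L)$ by (\ref{P_m_n_closed_form}). Raising $L$ only pushes the horizontal boundary $\cL_2$ further from the starting point, so $E_L\subseteq E_{L+1}$, and moreover $\bigcup_{L>n}E_L=E_\infty$, the event that the walk ever reaches $\cL_1$: if the walk reaches $\cL_1$ at a finite time $T_0$, it has then taken at most $T_0$ steps along $\bdelta^{(2)}$, so for every $L$ with $L-n>T_0$ it is still strictly below $\cL_2$ at step $T_0$. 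Continuity of measure then gives $\lim_{L\to\infty}\bbP_L(m,n)=\Prob(E_\infty)$.

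Finally, I would compute $\Prob(E_\infty)$ by projecting onto the first coordinate: the sequence of first coordinates of $\bs_T$ is a simple biased random walk on $\mathbb{Z}$ starting at $m$ that steps by $-1$ with probability $I$ and by $+1$ with probability $1-I$, and hitting $\cL_1$ corresponds to this walk reaching $-1$. A standard first-step analysis for the probability $q$ of reaching $-1$ from $0$ yields $q=I+(1-I)q^2$; the minimal non-negative root is $q=I/(1-I)$ for $0\le I<1/2$ and $q=1$ for $1/2\le I\le 1$. By spatial homogeneity and the strong Markov property, the probability of reaching $-1$ from $m\ge 0$ equals $q^{m+1}$, which is exactly (\ref{P_m_n_asymptotic}). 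The main obstacle is the boundary case $I=1/2$, where the one-dimensional walk has zero drift and $\Prob(E_\infty)=1$ must be justified by null-recurrence rather than negative drift; this is precisely what forces the choice of the minimal non-negative root of the quadratic and explains why $I=1/2$ belongs to the second branch of (\ref{P_m_n_asymptotic}).
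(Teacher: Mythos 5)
Your proposal is correct. The monotonicity half is essentially identical to the paper's: both compute $\bbP_{L+1}(m,n)-\bbP_L(m,n)=a_{L-n,m}(1-I)^{L-n}I^{m+1+L-n}$ from Theorem \ref{Theorem_DSA} and use $a_{i,m}\ge 1$ to conclude strict increase for $0<I<1$ and constancy at $I=0,1$. Where you genuinely diverge is in identifying the limit. The paper stays analytic: it uses monotone convergence to get existence of the limit, then the shift identity $\bbP_{L+1}(m+1,n+1)=\bbP_L(m+1,n)$ together with the recursion (\ref{equ_2D}) to derive a one-dimensional difference equation $g(m+1)=Ig(m)+(1-I)g(m+2)$ for $g(m+1)\triangleq\lim_L\bbP_L(m,n)$, with $g(0)=1$ from (\ref{boundary_condition_1}), and it pins down $g(1)$ by summing the Catalan generating function $\sum_i C_i[I(1-I)]^i$. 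You instead argue probabilistically: the events $E_L$ in (\ref{P_m_n_closed_form}) are nested with union equal to the event of ever hitting $\cL_1$ (since the second coordinate is nondecreasing and moves by unit steps, your inclusion $E_L\subseteq E_{L+1}$ and the union argument are both sound), so continuity of measure identifies the limit as the hitting probability of $\cL_1$ for the unconstrained walk, and projection onto the first coordinate reduces this to classical gambler's ruin, giving $q^{m+1}$ with $q=\min\{I/(1-I),1\}$. Your route is more elementary and does not need the coefficients $a_{i,m}$ or the Catalan machinery for the limit at all (only for strictness of the monotonicity); it also makes transparent why the limit coincides with the Nakamoto infinite-chain formula. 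The paper's route is self-contained and effectively re-derives that formula from its own closed form, which serves as an internal consistency check of Theorem \ref{Theorem_DSA}. The only step you should flesh out is the selection of the minimal non-negative root of $q=I+(1-I)q^2$: as you note, the equation alone admits both roots, so one must either invoke the standard truncation/monotone-iteration argument showing the hitting probability is the smallest fixed point in $[0,1]$, or rule out $q=1$ for $I<1/2$ by the positive drift (strong law) of the projected walk and establish $q=1$ for $I\ge 1/2$ by recurrence; with that made explicit your argument is complete.
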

\begin{proof}[Proof:\nopunct]
When $0<n<L$, from (\ref{P_m_n_theorem}), we have
\begin{align}\notag
\bbP_{L+1}(m,n)&=\sum_{i=0}^{L-n}a_{i,m}(1-I)^iI^{m+1+i}\\
&=\bbP_L(m,n)+a_{{L-n},m}(1-I)^{L-n}I^{m+1+L-n}.
\end{align}
From (\ref{a_i_m_theorem}), we know that $a_{i,m}\ge1$,$\forall i,m\ge0$, and hence, we have $a_{{L-n},m}(1-I)^{L-n}I^{m+1+L-n}>0$ when $0<I<1$ and $a_{{L-n},m}(1-I)^{L-n}I^{m+1+L-n}=0$ when $I=0$ or $1$. This implies that 
\begin{align}\label{P_increasing}
\bbP_{L+1}(m,n)>\bbP_{L}(m,n)&, \;\; \text{if }  0<I<1,\\ \label{P_constant}
\bbP_{L+1}(m,n)=\bbP_{L}(m,n)&, \;\; \text{if } I=0 \text{ or } 1.
\end{align}
From (\ref{P_increasing}) and (\ref{P_constant}), we know that when $0<I<1$,  $\bbP_{L}(m,n)$ is a strictly increasing function of $L$,  and when $I=0$ or $1$, $\bbP_{L}(m,n)$ is a constant as $L$ changes. 
By the definition of $\bbP_L(m,n)$, we know that $\bbP_L(m,n)\le1$, $\forall L$. Since when $0<I<1$, $\bbP_{L}(m,n)$ is strictly increasing and bounded from above, we know $\bbP_{L}(m,n)$ converges as $L$  increases by the monotone convergence theorem \cite{wheeden1977measure}. When $I=0$ or $1$, $\bbP_{L}(m,n)$ doesn't change as $L$ changes, and hence, $\bbP_{L}(m,n)$ also converges. 
Therefore, $\bbP_{L}(m,n)$ converges as $L$ increases, which implies
\begin{equation}\label{p_infty}
\lim_{L\to\infty}\bbP_{L}(m,n)=\lim_{L\to\infty}\bbP_{L+1}(m,n).\end{equation}
Note that if $n\ge L_0$, we have
\begin{align}\label{p_l+1} \notag
\bbP_{L+1}&(m+1,n+1)\\ \notag
&=\sum_{i=0}^{(L+1)-(n+1)-1}a_{i,m+1}(1-I)^iI^{(m+1)+1+i}\\ \notag
&=\sum_{i=0}^{(L)-(n)-1}a_{i,m+1}(1-I)^iI^{(m+1)+1+i}\\
&=\bbP_{L}(m+1,n).
\end{align}
From (\ref{p_infty}) and (\ref{p_l+1}), we can get  
\begin{equation}\label{p_n-1}
\lim_{L\to\infty}\bbP_{L}(m+1,n+1)=\lim_{L\to\infty}\bbP_{L}(m+1,n), \;\; \text{if }n\ge L_0.
\end{equation}
From (\ref{equ_2D}), we can get 
\begin{align}\label{equ_2d_infty}\notag
\lim_{L\to\infty}\bbP_L(m,n)=&\ I\times \lim_{L\to\infty}\bbP_L(m-1,n)\\
&+(1-I)\times \lim_{L\to\infty}\bbP_L(m+1,n+1).
\end{align}
Note that the condition $n\ge L_0$ holds for the cases of our interest since the authentic branch starts with $L_0$ blocks when the attacker launches a DSA. Substituting (\ref{p_n-1}) into (\ref{equ_2d_infty}), we have
\begin{align}\label{equ_1d_infty}\notag
\lim_{L\to\infty}\bbP_L(m,n)=&\ I\times \lim_{L\to\infty}\bbP_L(m-1,n)\\
&+(1-I)\times \lim_{L\to\infty}\bbP_L(m+1,n).
\end{align}
For simplicity, let $g(m+1)\triangleq \lim_{L\to\infty}\bbP_L(m,n)$, $m\ge -1$,  then from (\ref{equ_1d_infty}), we can obtain
\begin{equation}\label{equ_1d_g}
g(m+1)=I\times g(m)+(1-I)\times g(m+2).
\end{equation}
From (\ref{boundary_condition_1}), we have $\lim_{L\to\infty}\bbP_L(-1,n)=1$ which implies 
\begin{equation}\label{g_0}
g(0)=\lim_{L\to\infty}\bbP_L(-1,n)=1.
\end{equation}
From (\ref{equ_1d_g}), we have
\begin{equation}\label{equ_1d_grecursive}
g(m+2)-g(m+1)=\frac{I}{1-I}\left[g(m+1)-g(m)\right],
\end{equation}
which yields
 \begin{equation} \label{equ_1d_grecursive1}
g(m+1)-g(m)=\left(\frac{I}{1-I}\right)^m\left[g(1)-g(0)\right].
\end{equation}
Moreover, from the recursive equation in (\ref{equ_1d_grecursive1}), we can obtain
 \begin{align} \label{equ_1d_grecursive2}
g(m+1)-g(1)=\left[g(1)-g(0)\right]\sum_{i=1}^m\left(\frac{I}{1-I}\right)^m.
\end{align}
Next, we will determine $g(1)$. From (\ref{P_m_n_theorem}) and (\ref{a_i_m_theorem}), we know that for any $n$,
\begin{align}\label{g_1}\notag
g(1)=&\lim_{L\to\infty}\bbP_L(0,n)\\ \notag
=&\lim_{L\to\infty}\sum_{i=0}^{L-n-1}C_{i}(1-I)^iI^{1+i} \\ 
= &\sum_{i=0}^{\infty}C_{i}(1-I)^iI^{1+i}.
\end{align}
where $C_i$ is the $i$-th Catalan number. 
Note that the generating function for the Catalan numbers is defined by \cite{stanley2015catalan}
\begin{align}\label{generating_catalan}
c(x)=\sum_{i=0}^{\infty}C_{i}x^{i}=\frac{1-\sqrt{1-4x}}{2x}.
\end{align}
By setting $x=I(1-I)$ for some $I\in (0,1)$ in (\ref{generating_catalan}), we can obtain 
\begin{align}\label{generating_I} \notag
\sum_{i=0}^{\infty}C_{i}\left[I(1-I)\right]^{i}=&\frac{1-\sqrt{1-4I(1-I)}}{2I(1-I)}\\ \notag
=&\frac{1-\sqrt{(1-2I)^2}}{2I(1-I)}\\ 
=&\left\{\begin{array}{cc}\frac{1}{1-I}&,0<I<0.5,\\\frac{1}{I}&,0.5\le I<1.\end{array}\right.
\end{align}
Note that if $I=0$, then $g(1)=0$, and if $I=1$, then $g(1)=1$. Moreover, since $I\times\sum_{i=0}^{\infty}C_{i}(I(1-I))^{i}=\sum_{i=0}^{\infty}C_{i}(1-I)^iI^{1+i}=g(1)$, 
we can obtain from (\ref{generating_I})
\begin{align}\label{equ_g1} 
g(1)=\left\{\begin{array}{cc}\frac{I}{1-I}&,I<0.5,\\1&,I\ge0.5.\end{array}\right.
\end{align}
By employing (\ref{g_0}), (\ref{equ_1d_grecursive2}) and (\ref{equ_g1}), we can obtain
\begin{equation}
\lim_{L\to\infty}\bbP_L(m,n)=\left\{\begin{array}{*{5}{c}}(\frac{I}{1-I})^{m+1}&,\text{if }0\le I<0.5,\\1&,\text{if }0.5\le I\le 1.\end{array}\right.
\end{equation}
which completes the proof.
\end{proof}
In Theorem \ref{Theorem_asymptotic}, the assumption $0<n<L$ holds for the cases of our interest since $0<L_0<L$.
As demonstrated by Theorem \ref{Theorem_asymptotic}, since $\bbP_L(m,n)$ is an increasing function of $L$, and as $L$ increases, it converges to the probability of success in launching a DSA on an infinitely-long blockchain, we know that  the probability of success in launching a DSA on an infinitely-long blockchain when the counterfeit branch is $m$ blocks shorter than the authentic branch is an upper bound on the probability of success in launching a DSA on a finitely-long blockchain for any $m$ and $n$. It is well known that if an attacker owns more than half of the computational power of the whole network (i.e., $I>0.5$) and launches a DSA on an infinitely-long blockchain, which is called a 51\% attack, the probability of success in launching the 51\% attack is always one \cite{nakamoto2008bitcoin}. However, as shown by Theorem \ref{Theorem_asymptotic}, for the case where $0<I<1$ which is generally true if the normalized hash rate of the attacker is between 0 and 1, $\bbP_L(m,n)$ is a strictly increasing function of $L$, and therefore, we can see from (\ref{P_m_n_asymptotic}) that $\bbP_L(m,n)$ is strictly smaller than 1 even though $I>0.5$. This demonstrates that unlike infinitely-long blockchains, the probability of success in launching a 51\% attack on any finitely-long blockchain is strictly smaller than 1 which reveals that finitely-long  blockchains are more resistant to 51\% attacks when compared with infinitely-long blockchains.

\section{Simulation Results}
\label{Section_simulation}

In this section, we numerically study the probability $\bbP_L(L_0-L_a+1,L_0)$ that an attacker launches a successful DSA on a finitely-long blockchain for different parameters $L_0$, $L_a$, $L$, and $I$, respectively.  In particular, we employ Monte Carlo simulations to corroborate the theories developed in this paper. The number of Monte Carlo runs is $10^4$ in all simulation results, and the Monte Carlo simulation results are specified by the legend label ``Simulation'' in figures.

As the index $L_a$ of the block that an attacker attempts to falsify varies from $1$ to $60$, Fig. \ref{fig_PI} depicts $\bbP_L(L_0-L_a+1,L_0)$ for different $I$, where $L_0$ and $L$ are chosen to be 60 and 100, respectively. The numerical results yielded from Theorem \ref{Theorem_DSA} and Monte Carlo simulations are specified by dashed and solid lines in Fig. \ref{fig_PI}, respectively, which clearly agree with each other. Hence, the numerical results in Fig. \ref{fig_PI} corroborate Theorem \ref{Theorem_DSA}.  It is seen from Fig. \ref{fig_PI} that $\bbP_L(L_0-L_a+1,L_0)$ increases as $L_a$ increases, and moreover, for a given $L_a$, $\bbP_L(L_0-L_a+1,L_0)$ increases as $I$ increases.  This is because when $L_a$ increases, the gap between the number of blocks in the counterfeit branch and that in the authentic branch shrinks, 
and therefore, it is easier for the attacker to extend its counterfeit branch to surpass the authentic branch before the authentic branch grows to $L$ blocks. Moreover, when $I$ increases, the probability that a malicious miner successfully mines the next block in the network increases, and hence, it is also easier for the attacker to launch a successful DSA.
\begin{figure}[H]\centering   
	\includegraphics[width=0.5\textwidth]{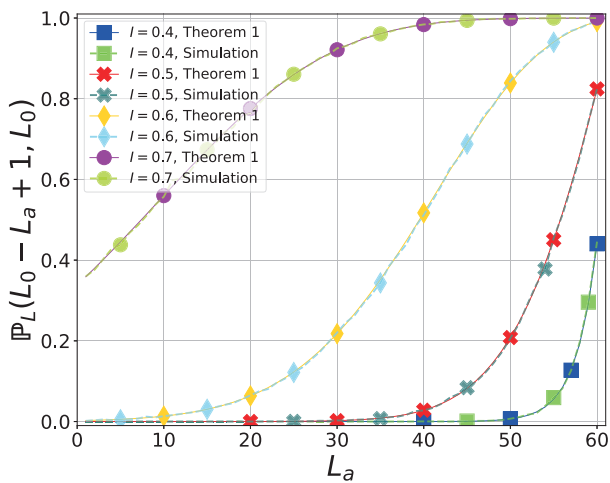}    
	\caption{The relationship between $\bbP_L(L_0-L_a+1,L_0)$ and $L_a$ for different $I$. }
	\label{fig_PI}
\end{figure}
Next, we numerically investigate how the value of $L$ impacts $\bbP_L(L_0-L_a+1,L_0)$. As $L_a$ varies from $1$ to $10$, Fig. \ref{fig_PL} depicts $\bbP_L(L_0-L_a+1,L_0)$ for different $L$, where $L_0=10$ and $I=0.4$. It is seen from Fig. \ref{fig_PL} that the numerical results yielded from Theorem 1 agree with that from Monte Carlo simulations. Moreover, for a given $L_a$, $\bbP_L(L_0-L_a+1,L_0)$ increases as $L$ increases. This can be explained by the fact that when $L$ increases, it takes more time for the honest miners to extend the authentic branch to reach $L$ blocks, and hence, the attacker has a better chance to extend the counterfeit branch to be longer than the authentic branch before the authentic branch grows to $L$ blocks.

\begin{figure}[H]\centering   
	\includegraphics[width=0.5\textwidth]{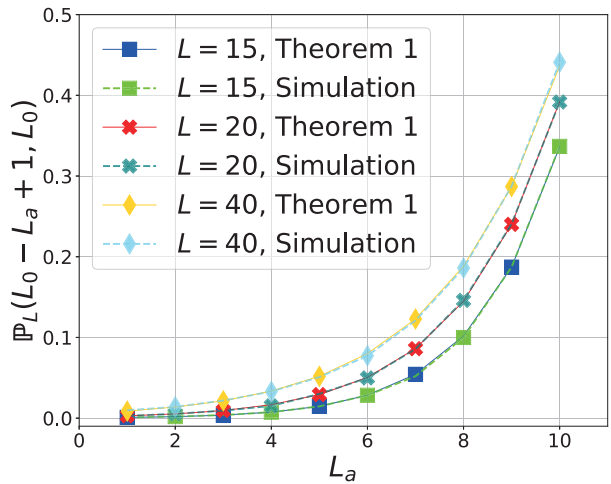}    
	\caption{The relationship between $\bbP_L(L_0-L_a+1,L_0)$ and $L_a$ for different $L$.}
	\label{fig_PL}
\end{figure}

\begin{figure}[H]\centering   
	\includegraphics[width=0.5\textwidth]{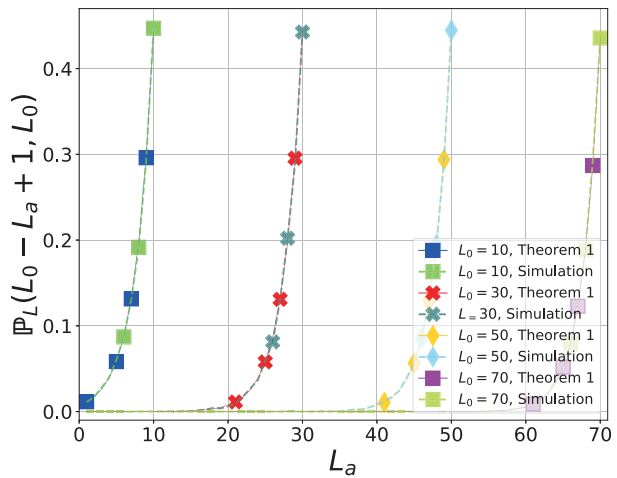}    
	\caption{The relationship between $\bbP_L(L_0-L_a+1,L_0)$ and $L_a$ for different $L_0$.}
	\label{fig_PN}
\end{figure}

 Fig. \ref{fig_PN} depicts $\bbP_L(L_0-L_a+1,L_0)$ for different $L_0$ as $L_a$ varies from $1$ to $70$, where $L=100$ and $I=0.4$. Again, the numerical results yielded from Theorem \ref{Theorem_DSA} exactly match the Monte Carlo simulation results. It is seen from Fig. \ref{fig_PN} that for a given $L_a$, $\bbP_L(L_0-L_a+1,L_0)$ increases as $L_0$ decreases. This is because when $L_a$ is given and $L_0$ decreases, the gap between the number of blocks in the counterfeit branch and that in the authentic branch shrinks, and hence, the attacker has a better chance to extend the counterfeit branch to be longer than the authentic branch before the authentic branch grows to $L$ blocks.

Lastly, we numerically corroborate Theorem \ref{Theorem_asymptotic}. As $L$ increases, Fig. \ref{fig_vs} depicts $\bbP_L(L_0-L_a+1,L_0)$ obtained from (\ref{P_m_n_theorem}) and (\ref{P_m_n_asymptotic}) for different $I$, where $L_a=1$ and $L_0=3$. The blue and red curves  are obtained from (\ref{P_m_n_theorem}) and (\ref{P_m_n_asymptotic}), respectively, when $I=0.4$. The yellow  and purple curves  are obtained from (\ref{P_m_n_theorem}) and (\ref{P_m_n_asymptotic}), respectively, when $I=0.6$. It is seen from Fig. \ref{fig_vs} that the curves obtained from (\ref{P_m_n_theorem}) converge to the   corresponding curves obtained from (\ref{P_m_n_asymptotic}) with the same $I$ as $L$ increases. Moreover, the curves obtained from (\ref{P_m_n_asymptotic}) are always above the corresponding curves obtained from (\ref{P_m_n_theorem}), which agrees with Theorem \ref{Theorem_asymptotic}. This implies that it is more difficult to launch a successful DSA on a finitely-long blockchain than on an infinitely-long blockchain. Moreover, when $I\ge0.5$, the DSA can always be launched successfully on an infinitely-long blockchain. However, it is seen from Fig. \ref{fig_vs} that when $I\ge0.5$, $\bbP_L(L_0-L_a+1,L_0)<1$ for a finitely-long blockchain, which indicates that finitely-long blockchains are more resistant to $51\%$ attacks than infinitely-long blockchains.

\begin{figure}[H]\centering   
	\includegraphics[width=0.5\textwidth]{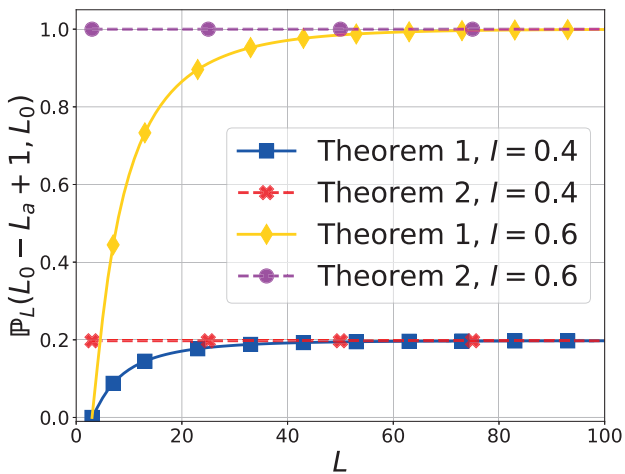}    
	\caption{The comparision between $\bbP_L(L_0-L_a+1,L_0)$  obtained from Theorem \ref{Theorem_DSA} and Theorem \ref{Theorem_asymptotic} as $L$ increases.}
	\label{fig_vs}
\end{figure}

\section{Conclusions}\label{Section_conclusion}

In this paper, we have theoretically studied the vulnerability of finitely-long blockchains in terms of securing data against double-spending attacks
We have developed a general closed-form expression for the probability that an attacker launches a successful double-spending attack on a finitely-long blockchain. This probability characterizes the vulnerability of  finitely-long blockchains in  securing data since if an attacker attempts to falsify the data which have already been stored in a blockchain, the attacker has to launch a successful double-spending attack on it. We have proven that finitely-long blockchains are less vulnerable to double-spending attacks than infinitely-long blockchains. Moreover, unlike infinitely-long blockchains, even though the normalized hash rate of an attacker is greater than $50\%$, the probability of success in launching a double-spending attack on any finitely-long blockchain is strictly less than one. This indicates that $51\%$ attacks cannot completely demolish the security of finitely-long blockchains. 


\appendices



%
%

\ifCLASSOPTIONcaptionsoff
  \newpage
\fi
\bibliographystyle{IEEEtran}
\bibliography{IEEEabrv,Blockchain}

\end{document}